\documentclass{article}

\usepackage{fullpage}
\usepackage[english]{babel}
\usepackage{verbatim}

\usepackage{amsmath}
\usepackage{amssymb}
\usepackage{amsthm}
\usepackage{algorithm}
\usepackage{algorithmic}
\usepackage{graphicx}
\newtheorem{theorem}{Theorem}
\newtheorem{lemma}{Lemma}

\title{On the computational complexity of solving stochastic
  mean-payoff games\thanks{Work supported by {\em Center for
      Algorithmic Game Theory}, funded by the Carlsberg
    Foundation.}}
\author{Vladimir Gurvich \and Peter Bro Miltersen}
\begin{document}

\maketitle
\begin{abstract}
We consider some well known families of two-player, zero-sum,
 turn-based,
perfect information games that
can be viewed as specical cases of Shapley's stochastic
games.
We show that the following tasks are polynomial time equivalent:
\begin{itemize}
\item{}Solving simple stochastic games, 
\item{}solving stochastic mean-payoff games with rewards and
  probabilities given in unary, and
\item{}solving stochastic mean-payoff games with rewards and
  probabilities given in binary.
\end{itemize}
\end{abstract}

\section{Introduction}
We consider some well known families of two-player, zero-sum,
 turn-based,
perfect information games that
can be viewed as specical cases of Shapley's stochastic
games \cite{Shapley}. They have appeared under various names 
in the literature in the
last 50 years and variants of them have been rediscovered many times
by various research communities. For brevity, in this paper we 
shall refer to them by the name of the researcher who first 
(as far as we know) singled them out.
\begin{itemize}
\item{}{\bf Condon games} \cite{Condon92} (a.k.a. simple stochastic
  games). A Condon game is given by a directed graph $G = (V,E)$ with
a partition of the vertices into $V_1$ (vertices beloning to Player
  1, $V_2$ (vertices belonging to Player 2), $V_R$ (random
  vertices), and a special terminal vertex {\bf 1}.
Vertices of $V_R$ have exactly two outgong arcs, the terminal
vertex 
{\bf 1}
has none,
 while all vertices in $V_1, V_2$ have at least
one outgoing arc.
Between moves, a pebble is resting at one of the vertices $u$.
If $u$ belongs to a player, this player should strategically
pick an outgoing arc from $u$ and move the pebble along this edge to
another vertex. If $u$ is a vertex in $V_R$, nature picks
an outgoing arc from $u$ uniformly at random and moves the pebble
along this arc. The objective of the game for Player 1 is to
reach {\bf 1} and should play so as to maximize his probability of
doing so. The objective for Player 2 is to prevent Player 1 from
reaching {\bf 1}.
\item{}{\bf Gillette games} \cite{Gil}. A Gillette game $G$ 
is given by a finite set of states $S$, partioned into $S_1$ 
(states belonging to Player 1) and $S_2$ (states belonging to
Player 2). To each state $u$ is associated a finite set of possible actions.
To each such action is associated a real-valued {\em
  reward} and a probability distribution on states. 
At any point in time of play, the game is in a particular
state $i$. The player to move chooses an action strategically
and the corresponding award is paid by Player 2 to Player 1.
Then, nature chooses the next state at random according to the probability
distribution associated with the action. The play continues forever 
and the accumulated
reward may therefore be unbounded. Fortunately, there are ways of associating
a finite payoff to the players in spite of this and more ways than
one (so $G$ is not just
one game, but really a family of games): For {\em discounted}
Gillette games, we fix a {\em discount factor} $\beta \in (0,1)$
and define the payoff to Player 1 to be 
\[ \sum_{i=0}^{\infty} \beta^i r_i \]  where $r_i$ is the reward 
incurred at stage $i$ of the game. We shall denote the resulting
game $G_\beta$. For {\em undiscounted} Gillette
game we define the payoff to Player 1 to be the {\em limiting average} payoff
\[ \liminf_{n \rightarrow \infty} (\sum_{i=0}^n
r_i)/(n+1). \]
We shall denote the resulting game $G_1$.
\end{itemize}
Undiscounted Gillette games have recently been referred to as
{\em stochastic mean-payoff} games in the computer science literature
\cite{Chatterjee}. A natural restriction of Gillette games is
to {\em deterministic} transitions (i.e., all probability distributions
put all probability mass on one state). This class of games
has been studied in the computer science literature under the names
of cyclic games \cite{GKK} and mean-payoff games \cite{ZP}.

A {\em strategy} for a game is a (possibly randomized) procedure
for selecting which arc or action to take, given the history of the
play so far. A {\em pure, positional strategy} is the very special case 
of this where the choice is
deterministic and only depends on the current vertex (or state), i.e.,
a pure, positional strategy is simply a map from vertices (for
Gillette games, states) to
vertices (for Gillette games, actions).

A strategy $x^*$ for Player 1 is said to be {\em optimal} if for all
vertices (states) $i$ it holds that,
\begin{equation}\label{opt1}
\inf_{y \in S_2} \mu^i(x^*,y) \geq \sup_{x \in S_1} \inf_{y \in S_2}
\mu^i(x, y)
\end{equation} 
where $S_1$ $(S_2)$ is the set of strategies for Player 1 (Player 2)
and $\mu^i(x, y)$ is the probability that Player 1 will end up in
{\bf 1} (for the case of Condon games) or the expected payoff of
Player 1 (for the case of Gillette games) when players play using
the strategy profile $(x,y)$ and the play starts in vertex (state) $i$.
Similarly, a strategy $y^*$ for Player 2 is said to be optimal if
\begin{equation}\label{opt2}
 \sup_{x \in S_1} \mu^i(x,y^*) \leq \inf_{y \in S_2} \sup_{x \in S_1}
\mu^i(x, y).
\end{equation}
For all games described here, a proof of Liggett and Lippman 
\cite{LL} (fixing a bug of a proof of Gillette \cite{Gil}) shows
that there are optimal, pure, positional
strategies and that a pair of such strategies form 
an exact Nash equilibrium of
the game. These facts imply that when testing whether conditions
(\ref{opt1}) and (\ref{opt2}) holds, it is enough to take the
inifima and suprema over the finite set of pure, positional strategies of the players.

In this paper, we consider {\em solving} games. By solving a game
we mean the
task of computing a pair of optimal pure, positional strategies, 
given a description of the game as input\footnote{One may also 
define solving a game as computing its value (or comparing its value
to a fixed number, as in \cite{Condon92}). For the games considerd
here, this is polynomial time
(Turing) equivalent to finding optimal strategies. Our reductions are
 more coveniently described in
terms of finding optimal strategies rather than values.}. To be
able to finitely represent the games, we assume that the discount
factor, rewards and probabilities are rational numbers and given
as fractions.

It is well known that Condon games can be seen as a special case of
undiscounted Gillette games (as described in the proof of Lemma 4
below), but a priori, solving Gillette games could be harder. 
A recent paper by Chatterjee and Henzinger \cite{Chatterjee} shows that solving
so-called stochastic parity games \cite{CJH1,CJH2} reduces to 
solving undiscounted
Gillette games. This motivates the study of the complexity of the
latter task. We show that the extra expressive power 
(compared to Condon games) 
of having rewards during the game in fact does not change the
computational complexity of solving the games. More precisely,
our main theorem is:
\begin{theorem}
The following tasks are polynomial time equivalent:
\begin{enumerate}
\item{}Solving Condon games (a.k.a.., simple stochastic games)
\item{}Solving undiscounted Gillette games (a.k.a, stochastic mean-payoff games) with rewards and probabilities represented in binary notation.
\item{}Solving undiscounted Gillette games with rewards and
  probabilities represented in unary notation.
\item{}Solving discounted Gillette games with discount factor, rewards
  and probabilities represented in binary notation.
\end{enumerate}
\end{theorem}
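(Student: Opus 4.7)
My plan is to establish the cycle $(1) \leq (3) \leq (2) \leq (4) \leq (1)$; since the binary/unary asymmetry forces the reduction $(3) \leq (2)$ but not its converse, routing the undiscounted binary game through the discounted version turns out to be the cleanest way to close the cycle. For $(1) \leq (3)$, I would replace the terminal vertex $\mathbf{1}$ of a Condon game by a self-loop paying reward $1$, add a $0$-paying absorbing sink that Player $2$ can use as a backup, and verify that the limiting average payoff of the resulting Gillette game equals the probability of being absorbed at $\mathbf{1}$; since all rewards and probabilities lie in $\{0, \tfrac{1}{2}, 1\}$, the unary encoding has polynomial size. The reduction $(3) \leq (2)$ is immediate: a unary encoding is at least as long as the corresponding binary one, so a polynomial-time binary algorithm yields a polynomial-time unary algorithm after a linear-time compression of the numerals.

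For $(4) \leq (1)$ the plan is to package the whole discounted Gillette game inside a Condon-game gadget. After an affine shift and rescaling so that every reward lies in $[0,1]$, I would replace each action $a$ available at a state $i$ (with reward $r(i,a)$ and transition law $P(i,a,\cdot)$) by a small sub-graph that first flips a biased coin with bias $1-\beta$: on one branch it flips a second biased coin with bias $r(i,a)$ and sends the pebble either to $\mathbf{1}$ or to a newly added absorbing sink $\mathbf{0}$, and on the other branch it samples the successor from $P(i,a,\cdot)$ and continues. Since $\beta$ and each $r(i,a)$ are rationals with polynomial bit length, each biased coin can be realised with a Knuth--Yao style binary tree of fair Condon random vertices augmented with a self-loop on the rejection branch. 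A direct calculation shows that the probability of reaching $\mathbf{1}$ under any pair of Condon strategies equals $(1-\beta)$ times the discounted value under the corresponding Gillette strategies, so the same pure positional strategies are optimal in both games and the reduction is polynomial.

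The remaining link $(2) \leq (4)$ is the step I expect to be the main obstacle, because it is the place where genuine quantitative information about the game must be produced rather than just a combinatorial gadget. The task is a quantitative Blackwell optimality bound: for an undiscounted Gillette game with binary data of total bit length $L$, one must produce a discount factor $\beta = 1 - 2^{-p(L)}$, of polynomial bit length, such that every pair of pure positional strategies optimal in $G_\beta$ is also optimal in $G_1$. My approach would be to write the discounted value of any fixed stationary profile $\pi$ as a rational function of $\beta$ via Cramer's rule applied to $(I - \beta P_\pi)$, bounding the bit length of numerator and denominator by a polynomial in $L$; then, using standard root-separation bounds, argue that whenever two such rational functions differ at $\beta = 1$ they already differ by at least $2^{-\mathrm{poly}(L)}$ throughout a whole left neighbourhood of $1$. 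Combined with the Liggett--Lippman fact (available from the paper) that it suffices to compare finitely many pure positional profiles, this preserves the correct ordering of profiles and gives the reduction. The delicate point is to handle the possible discontinuity of the Ces\`aro-average value at $\beta = 1$, which is exactly what forces the argument to go through pure positional profiles rather than through the value function directly.
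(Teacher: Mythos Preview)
Your cycle $(1)\le(3)\le(2)\le(4)\le(1)$ and the four reductions are exactly what the paper does (Lemmas~4, trivial, 2, 3 respectively), with the same mechanisms: the reward-$1$ self-loop at $\mathbf{1}$ for $(1)\le(3)$; Cramer's rule on $(I-\beta Q)^{-1}r$ to express each fixed-profile value as a ratio of degree-$n$ polynomials in $\gamma=1-\beta$ with bounded integer coefficients, giving an explicit Blackwell threshold $\beta^{*}=1-\bigl((n!)^{2}2^{2n+3}M^{2n^{2}}\bigr)^{-1}$ for $(2)\le(4)$; and the Zwick--Paterson absorbing-coin gadget (terminate with probability $1-\beta$, on termination hit $\mathbf{1}$ with probability $r$) for $(4)\le(1)$, the paper invoking their reduction of augmented to plain Condon games where you sketch a Knuth--Yao simulation.

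One caution on $(1)\le(3)$: the extra ``$0$-paying absorbing sink that Player~$2$ can use as a backup'' is not in the paper and is dangerous as phrased. In the paper's one-terminal formulation no sink is needed (non-terminating plays already have limiting average $0$), and if your sentence literally means adding a new action from every Player~$2$ vertex to the sink, the reduction breaks: Player~$2$ would force value $0$ from each of its states even when the Condon value there is positive. Either drop the sink, or make explicit that it is only the image of the terminal $\mathbf{0}$ in the two-terminal formulation, with no new arcs into it. Also, in your $(2)\le(4)$ sketch, ``differ at $\beta=1$'' is not quite the right hypothesis (the rational functions need not be defined there); what you actually want, and what the paper does, is that the sign of the polynomial difference is constant on $(\beta^{*},1)$, which your root-separation idea delivers once you work with $\gamma=1-\beta$.
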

In particular, there is a pseudopolynomial time algorithm for solving
undiscounted Gillette games if and only if there is a polynomial
time algorithm for this task.
The theorem follows from the Lemmas 2,3,4 below and the fact
that solving games with numbers in the input represented in unary trivially
reduces to solving games with numbers in the input represented in binary.
The proof techniques are fairly standard (although coming from
two different communities), but we find it worth pointing
out that they together imply the theorem above since it is relevant,
did not seem to be known\footnote{Although Condon \cite{Condon92}
  observed
that the case of Gillette games with {\em immediate} rewards reduces
to Condon games  and Zwick and Paterson \cite{ZP} that {\em deterministic}
Gillette games reduce to Condon games.}, and may even be considered slightly surprising, as
deterministic undiscounted Gillette games can be solved in 
pseudopolynomial time \cite{GKK,ZP}, while
solving them in polynomial time remains a challenging open problem.
An even more challenging problem is solving simple stochastic games in
polynomial time, so our theorem may be interpreted as a hardness
result. Note that a ``missing bullet'' in the theorem is 
solving discounted Gillette games given in unary notation. It is in fact
known that this can be done in polynomial time (even if only the
discount factor is given in unary while rewards and probabilities are
given in binary), see Littman \cite[Theorem 3.4]{litt}.
\section{Proofs}

\begin{lemma}\label{limit}
Let $G$ be a Gillette game with $n$ states and all transition probabilities and
rewards being fractions with integral numerators and denominators, all of
absolute value at most $M$. 
Let $\beta^* = 1 - ((n!)^2 2^{2n+3} M^{2 n^2})^{-1}$ and let 
$\beta \in [\beta^*, 1]$. Then, any optimal pure stationary 
strategy (for either
player) in the discounted Gillette game $G_{\beta}$ is also an optimal 
strategy in the undiscounted Gillette game $G_1$.
\end{lemma}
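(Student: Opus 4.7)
The plan is to reduce the lemma to a sign-preservation property for pure positional strategy pairs and then apply a short minimax argument. Write $\mu_\beta^i(x,y)$ and $\mu_1^i(x,y)$ for the discounted and undiscounted payoffs from state $i$ under profile $(x,y)$. \emph{Sign-preservation:} for $\beta \in [\beta^*,1)$, any two pure positional profiles $(x_1,y_1),(x_2,y_2)$ and any $i$, if $\mu_1^i(x_1,y_1) \ne \mu_1^i(x_2,y_2)$, then $\mu_\beta^i(x_1,y_1) - \mu_\beta^i(x_2,y_2)$ has the same sign as $\mu_1^i(x_1,y_1) - \mu_1^i(x_2,y_2)$. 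Since Liggett--Lippman guarantee pure positional optimal strategies in both $G_\beta$ and $G_1$, and optimality can be tested against the finite set of pure positional deviations, the lemma will follow (the case $\beta = 1$ being trivial).

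For the technical heart, fix pure positional $(x,y)$ giving a Markov chain $(P,r)$ whose entries are fractions with numerators and denominators bounded by $M$. Then $\mu_\beta^i(x,y) = e_i^\top(I - \beta P)^{-1}r$, which by Cramer's rule equals $\alpha_i(\beta)/\det(I-\beta P)$. After clearing a common denominator of the rationals in $P$ and $r$ (bounded by $M^{n^2}$) and expanding via Leibniz, numerator and denominator become integer polynomials in $\beta$ of degree at most $n$ with coefficients bounded crudely by $n!\cdot 2^n\cdot M^{n^2+O(n)}$. For two profiles, the difference $\Delta(\beta) := \mu_\beta^i(x_1,y_1) - \mu_\beta^i(x_2,y_2) = N(\beta)/Q(\beta)$ is an integer-polynomial ratio of degrees at most $2n$ with coefficients bounded by $(n!)^2 \cdot 2^{2n}\cdot M^{2n^2+O(n)}$. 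The Abelian theorem for finite Markov chains with rewards gives $\lim_{\beta\to 1^-}(1-\beta)\mu_\beta^i(x,y) = \mu_1^i(x,y)$, so $\lim_{\beta\to 1^-}(1-\beta)\Delta(\beta)$ is the undiscounted-value difference. When this limit is nonzero, apply root-separation: a nonzero integer valuation at $\beta = 1$ is at least $1$ in magnitude, while the Lipschitz bound $|p(\beta) - p(1)| \le \deg(p)\cdot\max_k|p_k|\cdot|1-\beta|$ implies neither $N$ nor $Q$ can vanish on $[\beta^*,1)$ and that the sign of $(1-\beta)\Delta(\beta)$ on that interval agrees with its limit at $1$. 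Substituting the coefficient bounds yields precisely $\beta^* = 1 - ((n!)^2\, 2^{2n+3}\, M^{2n^2})^{-1}$.

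For the minimax step, suppose $x^*$ is $\beta$-optimal for some $\beta \in [\beta^*,1)$ but not $1$-optimal. Then there exist pure positional $\tilde x$ and initial state $i_0$ with $\inf_y \mu_1^{i_0}(\tilde x,y) > \inf_y \mu_1^{i_0}(x^*,y)$, both infima over pure positional $y$. Let $y^*$ attain the right-hand infimum. Every pure positional $\tilde y$ then satisfies $\mu_1^{i_0}(\tilde x,\tilde y) > \mu_1^{i_0}(x^*,y^*)$, so by sign-preservation $\mu_\beta^{i_0}(\tilde x,\tilde y) > \mu_\beta^{i_0}(x^*,y^*)$ for every $\tilde y$. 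Hence $\inf_{\tilde y}\mu_\beta^{i_0}(\tilde x,\tilde y) > \mu_\beta^{i_0}(x^*,y^*) \ge \inf_y \mu_\beta^{i_0}(x^*,y)$, contradicting $\beta$-optimality of $x^*$. The argument for Player 2 is symmetric. The main obstacle is the quantitative bookkeeping in sign-preservation — tracking the $M^{n^2}$-scale common denominator, the Leibniz bounds on $\det(I-\beta P)$ and its adjugate, and the Lipschitz slope — to justify the exact constant $(n!)^2 2^{2n+3} M^{2n^2}$; any bound of the same asymptotic form would suffice for the downstream polynomial-time reductions, but this explicit form is what falls out of the computation.
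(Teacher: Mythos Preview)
Your overall shape is close to the paper's, and the minimax step is fine, but the heart of your sign-preservation argument has a real gap. You write $\Delta(\beta)=N(\beta)/Q(\beta)$ with $N,Q$ integer polynomials, then claim that ``a nonzero integer valuation at $\beta=1$'' together with the Lipschitz bound $|p(\beta)-p(1)|\le \deg(p)\cdot\max_k|p_k|\cdot|1-\beta|$ forces $N,Q$ not to vanish on $[\beta^*,1)$. But $Q(\beta)$ is (up to scaling) $\det(I-\beta P_1)\det(I-\beta P_2)$ for stochastic matrices $P_1,P_2$, and a stochastic matrix always has $1$ as an eigenvalue, so $Q(1)=0$. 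Your Lipschitz inequality then says nothing about the sign of $Q$ near $1$, and the argument collapses. One can repair this by factoring out powers of $(1-\beta)$, but then the quotient polynomials' coefficients are no longer bounded by the same constants, and the bookkeeping that is supposed to yield exactly $(n!)^2 2^{2n+3} M^{2n^2}$ no longer goes through as stated.

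The paper sidesteps this entirely by a change of variable $\gamma=1-\beta$ and by proving a slightly different claim. It does \emph{not} try to match the sign of $\mu_\beta^i(x,y)-\mu_\beta^i(z,u)$ to the undiscounted difference; it only shows this sign is \emph{constant} for all $\beta\in[\beta^*,1)$. In the variable $\gamma$, the relevant quantity is an integer polynomial $q(\gamma)$ of degree $\le 2n$ with coefficients bounded by $R=2(n!)^2 2^{2n}M^{2n^2}$, and for $0<\gamma<1/(2R)$ the lowest nonzero coefficient (an integer, hence $\ge 1$ in absolute value) dominates the tail, so the sign is fixed. No Abelian theorem, no evaluation at $\beta=1$, no factoring is needed. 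The bridge to $G_1$ is then the Liggett--Lippman statement that a strategy optimal in $G_\beta$ for \emph{all} $\beta$ close to $1$ is optimal in $G_1$: sign-constancy on $[\beta^*,1)$ upgrades ``optimal at one $\beta$'' to ``optimal at every $\beta$ in the interval'', which is exactly what that theorem consumes. Your route via the Abelian limit and a direct minimax contradiction is a legitimate alternative in principle, but to make it rigorous you must work in $\gamma$ (or equivalently factor out $(1-\beta)$ and rebound coefficients), and at that point the paper's argument is both shorter and yields the stated constant without extra effort.
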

\begin{proof}
The fact that {\em some} $\beta^*$ with the 
desired property exists is explicit in the proof 
of Theorem 1 of Liggett and Lippman
\cite{LL}. 
Here, we derive a concrete value for $\beta^*$. 
From the proof of Liggett and Lippman, we have that for 
$x^*$ to be an optimal pure stationary strategy (for Player 1) 
in $G_1$, it is
sufficient to be an optimal pure stationary strategy in $G_\beta$ for
all values of $\beta$ sufficiently close to $1$, i.e., to satisfy 
the inequalities
\[ \min_{y \in S'_2} \mu^i_\beta(x^*,y) \geq \max_{x \in S'_1} \min_{y \in S'_2}
\mu^i_\beta(x, y) \] for all states $i$ and for all values of 
$\beta$ sufficiently close to $1$, where $S'_1$ ($S'_2$) is the set
of pure, positional, strategies for Player 1 (2) and $\mu^i_\beta$ is
the expected payoff when game starts in position $i$ and the
discount
factor is $\beta$.
Similarly, for 
$y^*$ to be an optimal pure stationary strategy (for Player 1) 
in $G_2$, it is
sufficient to be an optimal pure stationary strategy in $G_\beta$ for
all values of $\beta$ sufficiently close to $1$, i.e., to satisfy 
the inequalities
\[ \max_{x \in S'_1} \mu^i_\beta(x,y^*) \leq \min_{y \in S'_2} \max_{x \in S'_1}
\mu^i_\beta(x, y). \] 
So, we can prove the lemma by showing that 
for all states $i$ and {\em all} pure stationary 
strategies $x,y,z,u$, the sign of 
$\mu^i_\beta(x,y) - \mu^i_\beta(z,u)$ 
is the same for all $\beta \geq \beta^*$. For fixed
strategies $x,y$ we have that $v_i = \mu^i_\beta(x,y)$ is the expected
total reward in a {\em discounted Markov process} and is therefore
given by the formula (see \cite{Howard}) 
\begin{equation}\label{closed}
v = (I - \beta Q)^{-1} r,
\end{equation}
where $v$ is the vector of
$\mu_\beta(x,y)$ values, one for each state, $Q$ is the matrix of
transition probabilities and $r$ is the vector of rewards (note that
for {\em fixed} positional strategies $x,y$, 
rewards can be assigned to states in 
the natural way). 
Let $\gamma = 1 - \beta$. Then, (\ref{closed}) 
is a system of linear equations in 
the unknowns $v$, where each coefficient is of the form $a_{ij} \gamma
+ b_{ij}$ where $a_{ij}, b_{ij}$ are rational numbers with numerators
with absolute value bounded by $2M$ and with denominators with
absolute value bounded by $M$. 
By multiplying the equations with all denominators, we can in fact
assume that $a_{ij}, b_{ij}$ are integers of absolute value less than $2M^n$.
Solving the equations using Cramer's rule, we may write
an entry of $v$ as a quotient between determinants of $n \times n$ 
matrices containing terms of the form $a_{ij} \gamma + b_{ij}$.
The determinant of such a
matrix is a polynomial in $\gamma$ of degree $n$ with the coefficient
of each term being of absolute value at most $n! (2 M^n)^n = n! 2^n M^{n^2}$. We
denote these two polynomials $p_1, p_2$. 
Arguing similarly about $\mu_\beta(z,u)$ and deriving corresponding
polynomials $p_3, p_4$, we have that 
$\mu^i_\beta(x,y) - \mu^i_\beta(z,u) \geq 0$ is equivalent to
$p_1(\gamma)/p_2(\gamma) - p_3(\gamma)/p_4(\gamma) \geq 0$, i.e.,
$p_1(\gamma)p_4(\gamma) - p_3(\gamma)p_2(\gamma) \geq 0$.
Letting $q(\gamma) = p_1(\gamma)p_4(\gamma) - p_3(\gamma)p_2(\gamma)$,
we have that $q$ is a polynomial in $\gamma$, with integer
coefficients, all of absolute value at most $R= 2(n!)^2 2^{2n} M^{2n^2}$.
Since $1 - \beta^* < 1/(2R)$, the sign of $q(\gamma)$ is the same
for all $\gamma \leq 1- \beta^*$, i.e., for all $\beta \geq \beta^*$.
This completes the proof.
\end{proof}
\begin{lemma}
Solving undiscounted Gillette games (with binary representation of
rewards and probabilities) polynomially reduces to solving discounted Gillette
games (with binary representation of discount factor, rewards, and probabilities).
\end{lemma}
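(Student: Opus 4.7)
The plan is to give a straightforward Karp-style reduction that simply invokes Lemma \ref{limit}. Given an undiscounted Gillette game $G$ with $n$ states and all rewards and transition probabilities being fractions with numerators and denominators of absolute value at most $M$, I would compute the value
\[ \beta^* = 1 - \frac{1}{(n!)^2\, 2^{2n+3}\, M^{2n^2}} \]
explicitly as a rational number in binary, and output the discounted game $G_{\beta^*}$ (same states, actions, rewards, and transitions as $G$) together with the instruction to solve it. By Lemma \ref{limit}, any pair of pure positional strategies that is optimal in $G_{\beta^*}$ is also optimal in $G_1$, so the strategies returned by the discounted solver can be passed back unchanged as a solution to the original undiscounted instance.

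The only thing that actually needs to be checked is that the reduction runs in polynomial time, i.e., that $\beta^*$ has a binary representation of size polynomial in the binary size $|G|$ of the input. Write $\beta^* = (D-1)/D$ where $D = (n!)^2\, 2^{2n+3}\, M^{2n^2}$. The bit-length of $D$ is bounded by $2\log_2(n!) + (2n+3) + 2n^2 \log_2 M = O(n^2 \log M + n \log n)$, which is polynomial in $n + \log M$, and hence polynomial in $|G|$. Consequently $\beta^*$ itself, written as a fraction with numerator $D-1$ and denominator $D$, has polynomial binary length, and all rewards and probabilities of $G_{\beta^*}$ are already part of the input of $G$. The description of $G_{\beta^*}$ is therefore polynomial in $|G|$ and can be produced by a polynomial-time Turing machine.

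Finally, I would note that computing $D$ explicitly in binary is standard: $n!$ can be computed by $n-1$ multiplications on numbers of polynomial bit-length, and the powers $2^{2n+3}$ and $M^{2n^2}$ by repeated squaring, both in time polynomial in the output length.

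I do not expect a real obstacle here; the entire content of the reduction is hidden inside Lemma \ref{limit}, which already gives a concrete threshold $\beta^*$ whose denominator has only a polynomial number of bits. The one subtlety to be careful about is simply confirming the bit-size bound on $\beta^*$, which is the step I highlighted above.
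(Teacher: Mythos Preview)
Your proposal is correct and matches the paper's own argument exactly: the paper simply invokes Lemma~\ref{limit} and observes that the binary representation of $\beta^* = 1 - ((n!)^2 2^{2n+3} M^{2n^2})^{-1}$ has length polynomial in the size of the game. Your write-up just spells out the bit-length estimate and the computation of $D$ more explicitly, which is fine but not strictly necessary.
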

\begin{proof}
This follows immediately from Lemma \ref{limit} by observing that
the binary representation of the number 
$\beta^* = 1 - ((n!)^2 2^{2n+3} M^{2 n^2})^{-1}$ has length polynomial 
in the size of the representation of the game.
\end{proof}
\begin{lemma}
Solving discounted Gillette game (with binary representation of
discount factor, rewards, and probabilities) polynomially reduces to solving Condon games.
\end{lemma}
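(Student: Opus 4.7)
The plan is to transform a discounted Gillette game $G_\beta$ into a Condon game $C$ of polynomial size so that an optimal pure positional strategy in $C$ translates directly into one for $G_\beta$. The construction proceeds in three stages: affine normalization of rewards into $[0,1]$, a termination gadget that simultaneously absorbs the discount factor and encodes the accumulated reward as a reachability probability, and a standard gadget that realizes arbitrary rational probabilities using only the fair coin flips available at vertices of $V_R$.

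First, I would replace each reward $r$ by $\tilde r = (r+M)/(2M)$, where $M$ bounds the absolute values of the rewards. This places every $\tilde r$ in $[0,1]$ and, for fixed $\beta$, changes each discounted payoff $\mu^i_\beta(x,y)$ by an affine transformation whose coefficients depend on $\beta$ and $M$ but not on $(x,y)$; the ordering of payoffs, and hence the set of optimal pure positional strategies of each player, is thus preserved. Next, for every state-action pair of $G_\beta$ I would introduce an auxiliary vertex that initiates a termination gadget with three outcomes: with probability $(1-\beta)\tilde r$ send the pebble to the Condon terminal $\mathbf{1}$; with probability $(1-\beta)(1-\tilde r)$ send it to a losing sink (a dedicated vertex in $V_2$ whose only outgoing arc is a self-loop, from which $\mathbf{1}$ is unreachable); and with probability $\beta$ transition to the next Gillette state according to $\mu$. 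Each original state $u\in S_i$ becomes a Player-$i$ vertex in $C$ whose outgoing arcs lead to the auxiliary vertices of the actions available at $u$. Since the event ``still in the game at round $t$'' has probability $\beta^t$, linearity of expectation gives
\[
\Pr[\text{reach }\mathbf{1}\mid i,x,y]\;=\;(1-\beta)\sum_{t\ge 0}\beta^{t}\,\mathbb{E}[\tilde r_{t}\mid i,x,y]\;=\;(1-\beta)\,\tilde\mu^i_\beta(x,y),
\]
so that maximizing the reachability probability in $C$ is equivalent to maximizing the normalized discounted payoff, and by the first step to maximizing the original discounted payoff; Player 2 is handled symmetrically.

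Finally, each of the probabilities $(1-\beta)\tilde r$, $(1-\beta)(1-\tilde r)$, $\beta$, and $\mu(j)$ above is rational with polynomial-size binary representation. Any target rational $p/q$ can be realized by a gadget consisting of $O(\log q)$ vertices of $V_R$: build a binary tree of depth $k=\lceil\log_{2} q\rceil$, whose $2^k$ leaves are each reached with probability $2^{-k}$, and then route the first $p$ leaves to a ``success'' exit, the next $q-p$ to a ``failure'' exit, and the remaining $2^k-q$ back to the root of the gadget (rejection sampling). Because cycles are permitted in Condon games and the rejection loop terminates with probability $1$, the gadget realizes probability $p/q$ exactly. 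Composing the termination and probability gadgets yields a Condon game whose size is polynomial in the binary encoding of $G_\beta$.

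The main obstacle is the arithmetic in this last stage: the probabilities that must be realized are products and differences such as $(1-\beta)\tilde r$, so they must first be placed over a common denominator whose bit-length stays polynomial before invoking the rejection-sampling construction. Once that bookkeeping is carried out, the correspondence of optimal pure positional strategies follows at once from the displayed identity, completing the reduction.
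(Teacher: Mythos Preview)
Your reduction follows the same line as the paper's: affinely push the rewards into $[0,1]$, replace each action by a random node that with probability $1-\beta$ absorbs into $\mathbf{1}$ or a dead sink (weighted by the normalized reward) and with probability $\beta$ continues according to the action's transition distribution, and finally convert the resulting rational-probability random nodes into fair-coin Condon vertices. The paper does not spell out this last step; it simply invokes Zwick and Paterson's result that ``augmented'' Condon games (random vertices with arbitrary rational outgoing distributions written in binary) reduce polynomially to plain Condon games.

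There is, however, one genuine slip in your explicit gadget. A complete binary tree of depth $k=\lceil\log_2 q\rceil$ has $2^k-1=\Theta(q)$ internal random vertices, not $O(\log q)$; since $q$ is given in binary, this is exponential in the input size, and the reduction as you wrote it is not polynomial. The $O(\log q)$ bound \emph{is} achievable, but not by materializing all $2^k$ leaves: one instead uses a chain of $k$ coin flips that compares the generated bits on the fly against the binary expansions of the thresholds $p$ and $q$, branching at level $i$ to ``success'', ``failure'', ``restart'', or ``next flip'' depending on the $i$-th bits; only $O(1)$ undecided states are needed per level, hence $O(k)$ vertices overall. Replace your full tree by that chain construction (or, as the paper does, simply cite the Zwick--Paterson augmented-to-plain reduction) and your argument goes through.
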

 \begin{proof}
Zwick and Paterson \cite{ZP} considered solving {\em deterministic}
discounted Gillette games, i.e., Gillette
games where the action deterministically determines the transition
taken. 
It is
natural to try to generalize their reduction so that it also works for
general discounted Gillette games. 
Since Condon games allows for vertices making random choices, 
a natural attempt is to simply 
simulate a stochastic 
transition by such random vertices. Such a
generalization is made even easier by the fact that Zwick and
Paterson proved that solving ``augmented'' Condon games where
random vertices are allowed to take choices given by arbitrary discrete
distributions with rational probability weights (represented in binary)
is polynomially equivalent to solving ``plain'' Condon games. 
We find that the reduction outlined above is indeed correct, even though the
correctness proof of Zwick and Paterson has to be modified slightly
compared to their proof.
The details follow.

We are given as input a Gillette game form $G$ and a discount
factor $\beta$ and must produce an augmented Condon 
game $G'$ whose solution yields
the solution to the Gillette game $G_\beta$.
First, we affinely scale and translate all rewards of $G$ so 
that they are in the interval $[0,1]$. This does not influence
the optimal strategies. 
Vertices of $G'$ include all states of $G$ (belonging to the same player in
$G'$
as in $G$), and, in addition, a random
vertex $w_{u,A}$ for each possible action $A$ of 
each state $u$ of $G$.
We also add a ``trapping'' vertex {\bf 0} with a single arc to
itself. It does not matter which player it belongs to.
We construct the arcs of $G'$ by adding, for each (state,action) pair
$(u,A)$ the ``gadget'' indicated in Figure \ref{fig}.
\begin{figure}
\begin{center}
\includegraphics[height=6cm]{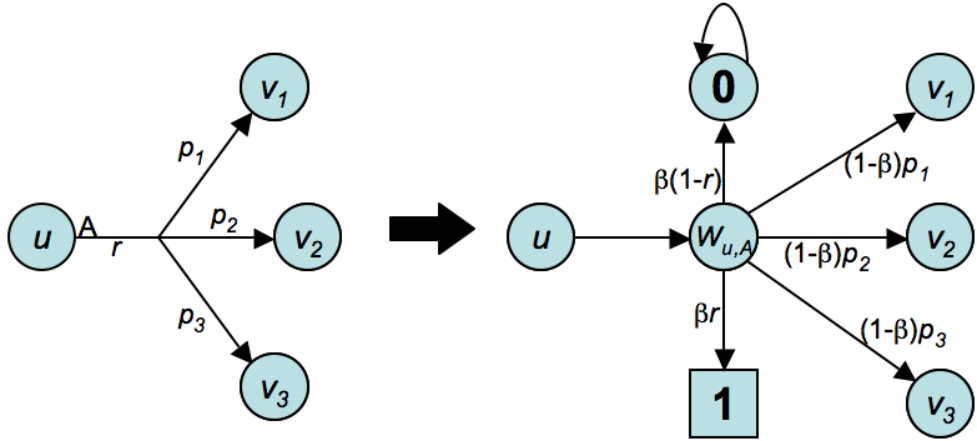}
\caption{\label{fig} Reducing discounted Gillette games to Condon games}
\end{center}
\end{figure}
To be precise, if the action has reward $r$ and leads to states
$v_1, v_2, \ldots, v_k$ with probability weights $p_1, p_2, \ldots, p_k$,
we include in $G'$ an arc from $u$ to $w_{u,A}$, arcs from
$w_{u,A}$ to $v_1, \ldots, v_k$ with probability weights
$(1-\beta)p_1,\ldots,(1-\beta)p_k$, an arc from $w_{u,A}$ to 
{\bf 0} with probability weight $\beta(1-r)$ and
finally an arc from $w_{u,A}$ to the terminal {\bf 1} with probability weight
$\beta r$. 

There is clearly a 1-1 correspondence between pure stationary strategies in $G$
and in $G'$. Thus, we are done if we show that the
optimal strategies coincide. To see this, fix a strategy profile 
for the two players and consider play starting in any vertex $u$. 
By construction, if the expected reward of the play in $G$ is $h$,
the probability that the play in $G'$ ends up in {\bf 1} is exactly $\beta h$.
Therefore, the two games are strategically equivalent.
\end{proof}
\begin{lemma}
Solving Condon games polynomially reduces to solving 
undiscounted Gillette games
with {\em unary} representation of rewards and probabilities.
\end{lemma}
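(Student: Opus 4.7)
The plan is to encode ``probability of reaching the terminal'' as a limiting-average reward. Given a Condon game $G$, I would build an undiscounted Gillette game $G'$ whose states are the vertices of $G$ with ownership inherited from $G$. For each arc $(u,v)$ leaving a player-owned vertex I introduce a deterministic action of reward $0$ that moves to $v$. Each random vertex $u$ of $G$, whose two successors are $v_1, v_2$, becomes a Gillette state (arbitrarily assigned to, say, Player~1) with a single action of reward $0$ and transition distribution $(1/2, 1/2)$; since this action is unique, state ownership does not affect play. Finally the terminal $\mathbf{1}$, which had no outgoing arc in $G$, receives a single self-looping action of reward $1$, making it an absorbing state on which Player~1 accumulates reward at rate one.

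Next, I would verify a value-preserving bijection between pure positional strategy profiles in $G$ and in $G'$, which is immediate from the construction. For any such profile $(x,y)$, $G$ and $G'$ induce the same finite Markov chain on the vertex set. In this chain the only absorbing state with nonzero reward is $\mathbf{1}$, and every other recurrent class consists of reward-$0$ states; hence almost every sample path is eventually trapped in a recurrent class and its running average of rewards converges, with value $1$ if $\mathbf{1}$ was reached and $0$ otherwise. Taking expectations,
\[ \mu^i_{G'}(x,y) \;=\; \mathbb{E}\!\left[\liminf_n \tfrac{1}{n+1}\sum_{t=0}^n r_t\right] \;=\; \mathbb{P}[\text{play starting in }i\text{ reaches }\mathbf{1}] \;=\; \mu^i_G(x,y). \]
Therefore conditions (\ref{opt1}) and (\ref{opt2}) for one game are literally those for the other, and optimal pure positional pairs can be transferred back and forth in either direction.

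For the reduction to be polynomial under a \emph{unary} encoding, I note that by construction every reward used in $G'$ lies in $\{0,1\}$ and every transition probability lies in $\{1, 1/2\}$, each of $O(1)$ unary size, while the numbers of states and actions are linear in $|G|$. So $G'$ can be written down in time polynomial in the unary size of $G$.

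The only subtle step I expect is the identity equating the expectation of the $\liminf$ with the absorption probability. Strictly one must invoke that under pure positional strategies the induced process is a finite Markov chain with a tidy recurrent-class decomposition, so that the $\liminf$ is almost surely an honest limit taking values in $\{0,1\}$; the exchange of expectation and limit is then trivial because the integrand is bounded. With this observation in place the rest of the argument is a direct translation, and no machinery beyond elementary finite-Markov-chain theory is required.
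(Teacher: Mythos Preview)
Your proposal is correct and follows essentially the same construction as the paper: vertices become states with the same ownership, player arcs become reward-$0$ deterministic actions, random vertices get a single reward-$0$ action with $(1/2,1/2)$ transitions, and the terminal $\mathbf{1}$ becomes a reward-$1$ self-loop. Your treatment of the equivalence is in fact more careful than the paper's, which simply asserts that absorption probability equals expected limiting-average reward without the recurrent-class analysis you sketch.
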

\begin{proof}
We are given a Condon game $G$ (a ``plain'' one, using
the terminology of the previous proof) and must construct
an undiscounted Gillette game $G'$. States of $G'$ will coincide
with vertices of $G$, with the states of $G'$ including the special
terminals {\bf 1}. Vertices $u$ belonging to a player in 
$G$ belongs to the same player in $G'$. For each outgoing
arc of $u$, we add an action in $G'$ with reward 0, and with a
deterministic transition to the endpoint of the arc of $G$.
Random vertices of $G$ can
be assigned to either player in 
$G'$, but he will only be given
a single ``dummy choice'': If the random vertex has arcs to $v_1$
and $v_2$, we add a single action in $G'$ with reward
$0$ and transitions into $v_1$, $v_2$, both with probability weight
$1/2$. The terminal {\bf 1} can be assigned to either player
in $G'$, but again he will be given only a dummy choice: We add
a single action with reward 1 from {\bf 1} 
and with a transition back into {\bf 1} 
with probability weight $1$.

There is clearly a 1-1 correspondence between pure stationary strategies in $G$
and strategies in $G'$. Thus, we are done if we show that the
optimal strategies coincide. To see this, fix a strategy profile 
for the two players and consider play starting in any vertex $u$. 
By construction, if the probability of the play ending up in {\bf 1}
in $G$ is $q$, the expected limiting average reward of the play in 
$G'$ is also $q$.
Therefore, the two games are strategically equivalent, and we are done.
\end {proof}
\section{Open problems}
Undiscounted Gillette games can be seen as generalizations of Condon games and
yet they are computationally equivalent. It is interesting to ask if 
further generalizations of Gillette games are also equivalent to solving Condon
games. It seems natural to restrict attention to cases where
it is known that optimal, positional strategies exists. This
precludes general stochastic games (but see \cite{HE}). An interesting
class of games generalizing undiscounted Gillette games was considered
by Filar \cite{Filar}. Filar's games allow simultaneous moves by
the two players. However, for any position, the probability
distribution on the next position can depend on the action of 
one player only. Filar
shows that his  games
are guaranteed to have optimal, positional 
strategies. The optimal strategies are not necessarily pure, but
the probabilities they assign to actions are guaranteed to be rational
numbers if rewards and probabilities are rational numbers. So, we ask: Is
solving Filar games polynomial time equivalent to solving Condon games?
\bibliographystyle{plain}

\begin{thebibliography}{10}

\bibitem{CJH1}
K.~Chatterjee, M.~Jurdzi\'{n}ski, and T.A. Henzinger.
\newblock Simple stochastic parity games.
\newblock In {\em CSL: Computer Science Logic}, Lecture Notes in Computer
  Science 2803, pages 100--113. Springer, 2003.

\bibitem{Chatterjee}
Krishnendu Chatterjee and Thomas~A. Henzinger.
\newblock Reduction of stochastic parity to stochastic mean-payoff games.
\newblock {\em Inf. Process. Lett.}, 106(1):1--7, 2008.

\bibitem{CJH2}
Krishnendu Chatterjee, Marcin Jurdzi\'{n}ski, and Thomas~A. Henzinger.
\newblock Quantitative stochastic parity games.
\newblock In {\em SODA '04: Proceedings of the fifteenth annual ACM-SIAM
  symposium on Discrete algorithms}, pages 121--130, Philadelphia, PA, USA,
  2004. Society for Industrial and Applied Mathematics.

\bibitem{HE}
Krishnendu Chatterjee, Rupak Majumdar, and Thomas~A. Henzinger.
\newblock Stochastic limit-average games are in {EXPTIME}.
\newblock International Journal of Game Theory, to appear.

\bibitem{Condon92}
Anne Condon.
\newblock The complexity of stochastic games.
\newblock {\em Information and Computation}, 96:203--224, 1992.

\bibitem{Filar}
J.A. Filar.
\newblock Ordered field property for stochastic games when the player who
  controls transitions changes from state to state.
\newblock {\em Journal of Optimization Theory and Applications}, 34:503--513,
  1981.

\bibitem{Gil}
D.~Gillette.
\newblock Stochastic games with zero stop probabilities.
\newblock In M.~Dresher, A.W. Tucker, and P.~Wolfe, editors, {\em Contributions
  to the Theory of Games III}, volume~39 of {\em Annals of Mathematics
  Studies}, pages 179--187. Princeton University Press, 1957.

\bibitem{GKK}
V.A. Gurvich, A.V. Karzanov, and L.G. Khachiyan.
\newblock Cyclic games and an algorithm to find minimax cycle means in directed
  graphs.
\newblock {\em USSR Computational Mathematics and Mathematical Physics},
  28:85--91, 1988.

\bibitem{Howard}
Ronald~A. Howard.
\newblock {\em Dynamic Programming and Markov Processes}.
\newblock M.I.T. Press, 1960.

\bibitem{LL}
Thomas~M. Liggett and Steven~A. Lippman.
\newblock Stochastic games with perfect information and time average payoff.
\newblock {\em SIAM Review}, 11(4):604--607, 1969.

\bibitem{litt}
Michael~Lederman Littman.
\newblock {\em Algorithms for sequential decision making}.
\newblock PhD thesis, Brown University, Department of Computer Science, 1996.

\bibitem{Shapley}
L.S. Shapley.
\newblock Stochastic games.
\newblock {\em Proc. Nat. Acad. Science}, 39:1095--1100, 1953.

\bibitem{ZP}
Uri Zwick and Mike Paterson.
\newblock The complexity of mean payoff games on graphs.
\newblock {\em Theor. Comput. Sci.}, 158(1-2):343--359, 1996.

\end{thebibliography}

\end{document}